\newtheorem{theorem}{Theorem}[section]
\newtheorem{lemma}[theorem]{Lemma}
\newtheorem{rem}[theorem]{Remark}
\renewcommand{\Re}{\mathbb R}
\renewcommand{\epsilon}{\varepsilon}
\newcommand{\Red}{\Re^d}
\renewcommand{\phi}{\varphi}
\newcommand{\noshow}[1]{}
\newcommand{\cardin}[1]{\lvert {#1} \rvert}
\newcommand{\D}{\mathcal{D}}
\title{Bounding a global red-blue proportion using local conditions}
\author[1]{M\'arton Nasz\'odi \thanks{Email address: marton.naszodi@math.elte.hu.}}
\author[2]{Leonardo Mart\'inez-Sandoval \thanks{Email address: leomtz@im.unam.mx. }}
\author[3]{Shakhar Smorodinsky\thanks{Email address: shakhar@math.bgu.ac.il.}}
\affil[1]{\footnotesize{Department of Geometry, Lorand E\"otv\"os University, Budapest, Hungary}}
\affil[1]{EPFL, Lausanne, Switzerland}
\affil[2]{Department of Computer Science, Ben-Gurion University of the Negev, Be'er-Sheva Israel.}
\affil[3]{Department of Mathematics, Ben-Gurion University of the Negev, Be'er-Sheva Israel.}
\begin{document}
\maketitle

\begin{abstract}
We study the following local-to-global phenomenon: Let $B$ and $R$ be
two finite sets of (blue and red) points in the Euclidean plane $\Re^2$. Suppose
that in each ``neighborhood"  of a red point, the number of blue points is at
least as large as the number of red points. We show that in this case the total
number of blue points is at least one fifth of the total number of red
points. We also show that this bound is optimal and we generalize the result
to arbitrary dimension and arbitrary norm using results from Minkowski arrangements.
\end{abstract}


\section{Introduction}

Consider the following scenario in wireless networks. Suppose we have $n$ clients and $m$ antennas where both are represented as points in the plane (see Figure \ref{fig:hypothesis}). Each client has a wireless device that can communicate with the antennas. Assume also that each client is associated with some disk centered at the client's location and having radius representing how far in the plane his device can communicate. Suppose also, that some communication protocol requires that in each of the clients disks, the number of antennas is at least some fixed proportion $\lambda>0$ of the number of clients in the disk.  Our question is: does such a local requirement imply a global lower bound on the number of antennas in terms of the number of clients? In this paper we answer this question and provide exact bounds. Let us formulate the problem more precisely.

\begin{figure}
	\centering
		\includegraphics[width=0.40\textwidth]{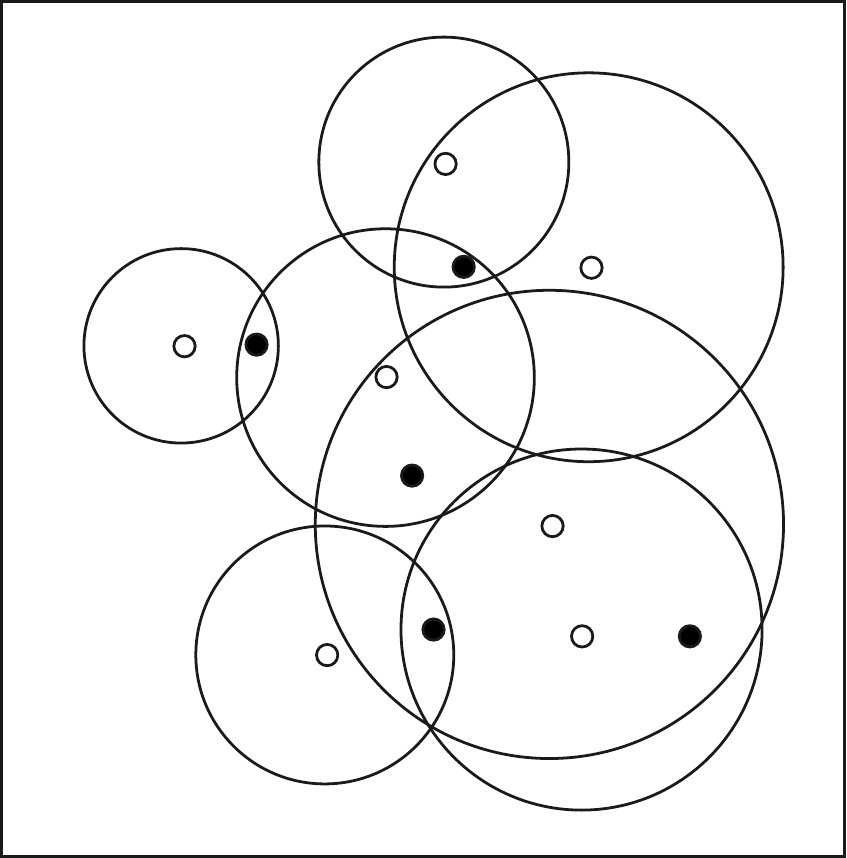}
	\caption{In each device range (each disk) there are at least as many antennas (black dots) as devices (white dots), so the hypothesis holds for $\lambda=1$.}
	\label{fig:hypothesis}
\end{figure}

Let $B$ and $R=\{p_1,\ldots,p_n\}$ be two finite sets in $\Re^2$.
Let $\D = \{D_1,\ldots,D_n\}$ be a set of Euclidean disks centered at the red
points,
i.e., the center of $D_i$ is $p_i$. Let $\{\rho_1,\ldots,\rho_n\}$ be the radii
of the disks in $\D$.

\begin{theorem}\label{thm:euclideanplane}
Assume that for each $i$ we have $\cardin{D_i \cap B} \geq \cardin{D_i \cap
R}$. Then
$\cardin{B} \geq \frac{n}{5}$.
Furthermore, the multiplicative constant $\frac{1}{5}$ cannot be improved.
\end{theorem}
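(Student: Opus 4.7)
The argument has two parts: an extremal construction and a lower bound via a Minkowski-arrangement extraction.

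\emph{Optimality of $1/5$.} I would give a regular-pentagon construction: place one blue point at the origin and five red points at the vertices of a regular pentagon inscribed in the unit circle. The side length $2\sin(\pi/5) > 1$ strictly, so I can choose each radius $\rho_i \in \bigl(1,\, 2\sin(\pi/5)\bigr)$. Then $D_i \cap R = \{p_i\}$ and $D_i \cap B = \{0\}$, meeting the hypothesis with equality; $k$ well-separated rescaled copies yield $n=5k$ and $\cardin{B}=k$, showing $1/5$ is tight.

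\emph{Lower bound.} The idea is to pass from $\D$ to a subfamily that still witnesses every red point but whose disks form a Minkowski arrangement, and then apply a packing-type multiplicity bound.

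\emph{Step 1 (Greedy extraction).} Relabel so $\rho_1 \ge \cdots \ge \rho_n$. Build $R' \subseteq \{1,\ldots,n\}$ by scanning in order and adding $i$ iff $p_i \notin D_j$ for every already chosen $j \in R'$. Then (a) every $p_k$ lies in some $D_j$ with $j \in R'$ (either $k$ itself, or the $j$ whose disk excluded it), and (b) for distinct $i,j \in R'$ with $\rho_i \ge \rho_j$ the rule gives $\cardin{p_i - p_j} > \rho_i \ge \rho_j$, so neither center lies in the other's disk; that is, $\{D_j : j \in R'\}$ is a Minkowski arrangement.

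\emph{Step 2 (Multiplicity bound).} I would prove that every point $q$ belongs to at most $5$ of the disks $\{D_j : j \in R'\}$. Suppose $q \in D_{j_1} \cap \cdots \cap D_{j_k}$ and set $a_t := \cardin{q - p_{j_t}} \le \rho_{j_t}$. For $s \ne t$, the Minkowski property gives $\cardin{p_{j_s} - p_{j_t}} > \max(\rho_{j_s},\rho_{j_t}) \ge \max(a_s,a_t)$, so the law of cosines at $q$ yields
\[
\cos \angle p_{j_s} q\, p_{j_t} \;=\; \frac{a_s^2 + a_t^2 - \cardin{p_{j_s} - p_{j_t}}^2}{2 a_s a_t} \;<\; \frac{\min(a_s,a_t)}{2 \max(a_s,a_t)} \;\le\; \tfrac{1}{2}.
\]
Hence the unit vectors $(p_{j_t} - q)/a_t$ in $\Re^2$ have pairwise angular separation strictly greater than $\pi/3$, and by pigeonhole on $2\pi$ at most five such vectors can coexist, so $k \le 5$.

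\emph{Step 3 (Double counting).} Combining (a), the hypothesis $\cardin{D_j \cap B} \ge \cardin{D_j \cap R}$, and the multiplicity bound,
\[
n \;\le\; \sum_{j \in R'} \cardin{D_j \cap R} \;\le\; \sum_{j \in R'} \cardin{D_j \cap B} \;\le\; 5\,\cardin{B}.
\]

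The main obstacle is the multiplicity bound in Step 2: for equal radii it follows from an elementary angular argument, but with varying radii one must exploit the stronger Minkowski inequality $\cardin{p_{j_s} - p_{j_t}} > \max(\rho_{j_s},\rho_{j_t})$ as above. This is precisely the point where Minkowski-arrangement results enter and which one would need to generalize to higher dimensions and arbitrary norms, as the abstract foreshadows.
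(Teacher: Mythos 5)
Your proposal is correct and matches the paper's argument essentially step for step: the greedy extraction in decreasing order of radius is the paper's Lemma~\ref{lemma:minkowski}, your multiplicity bound (at most $5$ disks of a strict Minkowski arrangement share a point, via the law of cosines and a $60^\circ$ pigeonhole) is the paper's Lemma~\ref{lem:euclideanMbound}, and the double counting and pentagon construction coincide with the proof of Theorem~\ref{thm:minkowskispecific} and the paper's sharpness example. No substantive differences to report.
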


%


Such a local-to-global ratio phenomenon was shown to be useful in a more combinatorial setting. Pach et. al. \cite{Pach2015},  solved a conjecture by Richter and Thomassen \cite{Richter1995} on the number of total ``crossings" that a family of pairwise intersecting curves in the plane in general position can have. Lemma 1 from their paper is a first step in the proof and it consists of a local-to-global phenomenon as described above.


We will obtain Theorem \ref{thm:euclideanplane} from a more general result. In order to state it,
we introduce some terminology.

Let $K$ be an origin-symmetric convex body in $\Red$, that is, the unit ball of
a norm.

A \emph{strict Minkowski arrangement} is a family $\D=\{K_1=p_1+\rho_1K,\ldots,
K_n=p_n+\rho_nK\}$ of homothets of $K$, where $p_i\in\Red$ and $\rho_i>0$, such
that no member of the family contains the center of another member. An
\emph{intersecting family} is a family of sets that all share at least one
element.

We denote the \emph{maximum cardinality of an intersecting strict Minkowski
arrangement} of homothets of $K$ by $M(K)$. It is known that $M(K)$ exists for every $K$ and $M(K)\leq 3^d$ (see, e.g., Lemma~21 of \cite{NPS16}).
On the other hand (somewhat surprisingly), there is an origin-symmetric convex body
$K$ in $\Re^d$ such that $M(K)=\Omega\left(\sqrt{7}^d\right)$, \cites{T98,NPS16}. For more on Minkowski arrangements
see, e.g., \cites{FL94}.

We need the following auxiliary Lemma.

\begin{lemma}\label{lemma:minkowski}
Let $K$ be an origin-symmetric convex body in $\Red$. Let $R=\{p_1,\ldots,p_n\}$ be a set of points in $\Red$ and let
$\D=\{K_1=p_1+\rho_1K,\ldots, K_n=p_n+\rho_nK\}$ be a family of homothets of
$K$. Then there exists a subfamily $\D' \subset \D$ that covers $R$ and forms a strict Minkowski arrangement.
Moreover, $\D^\prime$ can be found using a greedy algorithm.
\end{lemma}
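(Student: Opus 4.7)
The plan is to construct $\mathcal{D}'$ via a natural greedy procedure. First I would sort the homothets in non-increasing order of radii, so that $\rho_1 \geq \rho_2 \geq \cdots \geq \rho_n$ (ties broken arbitrarily). Then I would process the $K_i$ in this order, inserting $K_i$ into $\mathcal{D}'$ exactly when its center $p_i$ is not already contained in some $K_j$ that was previously placed in $\mathcal{D}'$.

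The covering property is then immediate. Since $0 \in K$, each center $p_i$ lies in its own homothet $K_i$. Thus if $K_i \in \mathcal{D}'$ then $p_i$ is covered by $K_i$ itself, while if $K_i$ was rejected, the rejection criterion directly supplies some $K_j \in \mathcal{D}'$ with $p_i \in K_j$. Either way, $\mathcal{D}'$ covers $R$.

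For the strict Minkowski condition, I would pass to the gauge functional $\|x\|_K = \inf\{t > 0 \st x \in tK\}$, which is a genuine norm precisely because $K$ is origin-symmetric. Suppose $K_i, K_j \in \mathcal{D}'$ with $i < j$, so $\rho_i \geq \rho_j$ and $K_i$ was processed before $K_j$. The fact that $K_j$ was nevertheless later inserted means $p_j$ was not covered by $K_i$; equivalently, $\|p_j - p_i\|_K > \rho_i$. By symmetry of the norm, $\|p_i - p_j\|_K = \|p_j - p_i\|_K > \rho_i \geq \rho_j$, so $p_i \notin K_j$ either. Hence no member of $\mathcal{D}'$ contains the center of another, which is the required property.

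The only delicate point, and the reason origin-symmetry of $K$ is essential, is that one must exclude both $p_i \in K_j$ and $p_j \in K_i$: the greedy choice by decreasing radius directly eliminates the ``newer lies inside older'' direction, and symmetry of $\|\cdot\|_K$ promotes this single inequality to the two-sided exclusion required. Without symmetry one would only obtain a one-sided Minkowski condition, so this is the step I expect to be the main (mild) obstacle.
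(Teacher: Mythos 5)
Your proposal is correct and follows essentially the same route as the paper: sort by non-increasing homothety ratio, greedily insert a homothet exactly when its center is not yet covered, and use the symmetry of the gauge $\|\cdot\|_K$ together with $\rho_i \geq \rho_j$ to upgrade the one-sided exclusion $p_j \notin K_i$ to the two-sided condition. The paper phrases the second half as a contradiction in two cases ($i<j$ and $i>j$) rather than via the norm inequality, but the content is identical.
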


As a corollary, we will obtain the following theorem.

\begin{theorem}\label{thm:minkowskispecific}
Let $K$ be an origin-symmetric convex body in $\Red$. Let $R=\{p_1,\ldots,p_n\}$ be a set of points in $\Red$ and let
$\D=\{K_1=p_1+\rho_1K,\ldots, K_n=p_n+\rho_nK\}$ be a family of homothets of
$K$ where $\rho_1,\ldots,\rho_n>0$. Let $B$ be another set of points
in $\Red$, and assume that, for some $\lambda>0$, we have
\begin{equation}
\frac{\cardin{B\cap K_i}}{\cardin{R\cap K_i}} \geq \lambda,
\end{equation}
for all $i\in[n]$.
Then $\frac{\cardin{B}}{\cardin{R}} \geq \frac{\lambda}{3^d}$.
\end{theorem}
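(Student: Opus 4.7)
The plan is to combine Lemma~\ref{lemma:minkowski} with the bound $M(K)\le 3^d$ via a double counting argument on $\sum_{K_i\in\D'}\cardin{B\cap K_i}$ for a well-chosen subfamily $\D'\subset\D$. The greedy subfamily from Lemma~\ref{lemma:minkowski} gives two crucial properties simultaneously: it covers $R$ (so we can bound $\cardin{R}$ from above by a sum of local red counts without loss), and it is a strict Minkowski arrangement (so the over-counting on the blue side is controlled).

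First I would invoke Lemma~\ref{lemma:minkowski} to extract $\D'\subset\D$ covering $R$ and forming a strict Minkowski arrangement. Since every red point lies in some $K_i\in\D'$, we obtain
\begin{equation}
\cardin{R}\ \le\ \sum_{K_i\in\D'}\cardin{R\cap K_i}.
\end{equation}
Applying the local hypothesis $\cardin{B\cap K_i}\ge\lambda\cardin{R\cap K_i}$ to each term gives the lower bound
\begin{equation}
\sum_{K_i\in\D'}\cardin{B\cap K_i}\ \ge\ \lambda\cardin{R}.
\end{equation}

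Next I would bound the same sum from above by switching the order of summation:
\begin{equation}
\sum_{K_i\in\D'}\cardin{B\cap K_i}\ =\ \sum_{b\in B}\cardin{\{K_i\in\D'\st b\in K_i\}}.
\end{equation}
For a fixed $b\in B$, the collection $\{K_i\in\D'\st b\in K_i\}$ is intersecting (every member contains $b$) and remains a strict Minkowski arrangement as a subfamily of $\D'$; hence its cardinality is at most $M(K)\le 3^d$. Therefore $\sum_{K_i\in\D'}\cardin{B\cap K_i}\le 3^d\cardin{B}$, and combining with the lower bound yields $\lambda\cardin{R}\le 3^d\cardin{B}$, as desired.

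The argument is essentially a two-line double count once Lemma~\ref{lemma:minkowski} is in hand, so there is no serious obstacle at this stage; the conceptual work has been pushed into that lemma and into the general bound $M(K)\le 3^d$. The only point requiring a little care is making sure both properties of $\D'$—covering $R$ \emph{and} being a strict Minkowski arrangement—are used: the covering property prevents the lower bound from losing a factor, while the Minkowski property is exactly what caps the blue over-count by $M(K)$.
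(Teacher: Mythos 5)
Your proof is correct and follows essentially the same route as the paper: extract the greedy subfamily from Lemma~\ref{lemma:minkowski}, bound $\cardin{R}$ by the sum of local red counts via the covering property, apply the local hypothesis, and cap the blue over-count by $M(K)\le 3^d$ using the strict Minkowski arrangement property. You merely make the final double-counting step slightly more explicit than the paper does.
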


In Theorem \ref{thm:euclideanplane} the convex body $K$ is a Euclidean unit disk in the plane. Another case of special interest is when the convex body $K$ is a unit cube and thus it induces the $\ell_\infty$ norm. In this situation we get a sharper and optimal inequality. 
\begin{theorem}
\label{thm:cubes}
	If $K$ is the unit cube in $\mathbb{R}^d$, then the conclusion in Theorem \ref{thm:minkowskispecific} can be strengthened to $\frac{|B|}{|R|}\geq \frac{\lambda}{2^d}$. Furthermore, the multiplicative constant $\frac{1}{2^d}$ cannot be improved.
\end{theorem}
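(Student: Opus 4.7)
The plan is to follow the same scheme as Theorem \ref{thm:minkowskispecific}, replacing the universal bound $M(K) \leq 3^d$ with the sharper bound $M(K) \leq 2^d$ that holds for the cube. Granting such a bound, the argument goes through verbatim: Lemma \ref{lemma:minkowski} produces a subfamily $\D' \subseteq \D$ that covers $R$ and forms a strict Minkowski arrangement; the covering property plus the hypothesis give $\lambda \cardin{R} \leq \sum_{K_i \in \D'} \cardin{B \cap K_i}$; and since the cubes in $\D'$ containing any fixed point $b \in B$ themselves form an intersecting strict Minkowski arrangement, there are at most $M(K) \leq 2^d$ of them, so the right-hand side is at most $2^d \cardin{B}$, yielding $\cardin{B}/\cardin{R} \geq \lambda/2^d$.

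The crux is proving $M(K) \leq 2^d$ for the unit cube $K = [-1,1]^d$. I plan a pigeonhole argument on sign vectors. Take an intersecting strict Minkowski arrangement $\{K_1, \ldots, K_n\}$ with $K_i = p_i + \rho_i K$; after translation assume every member contains the origin, so $|p_i^{(k)}| \leq \rho_i$ for each coordinate $k$. Assign to $K_i$ the sign vector $\sigma_i \in \{+,-\}^d$ with $\sigma_i^{(k)} = +$ iff $p_i^{(k)} \geq 0$. If $\sigma_i = \sigma_j$, then in every coordinate both $p_i^{(k)}$ and $p_j^{(k)}$ lie in the interval $[0, \max(\rho_i, \rho_j)]$ (up to reflection of the whole axis), so $\|p_i - p_j\|_\infty \leq \max(\rho_i, \rho_j)$, placing the center of the smaller cube inside the larger cube --- contradicting strictness. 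Hence the sign vectors are distinct and $n \leq 2^d$.

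For the optimality I would use the vertices of the unit cube. Fix $\lambda = 1$ (the general case follows by scaling the blue cluster). Place $n$ red points clustered at each vertex of $\{0,1\}^d$, so $\cardin{R} = 2^d n$, and $n$ blue points near the center $(1/2, \ldots, 1/2)$, so $\cardin{B} = n$. Around each red point take a cube of radius $\rho \in (1/2, 1)$; in $\ell_\infty$ distance every other vertex lies at distance exactly $1$, so each such cube contains precisely its own red cluster together with all of $B$, and the local ratio equals $1 \geq \lambda$. Globally $\cardin{B}/\cardin{R} = 1/2^d$, matching the bound.

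The main obstacle is the sign-vector partition establishing $M(K) \leq 2^d$; once one thinks of indexing cubes by the orthant of their center relative to the common point, the $\ell_\infty$ geometry makes the collision estimate immediate. The remaining ingredients --- the derivation from Lemma \ref{lemma:minkowski} and the vertex-cluster construction --- are direct and require only routine verification, including the small bookkeeping needed when some coordinate $p_i^{(k)}$ equals zero.
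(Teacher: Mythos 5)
Your proposal is correct and follows essentially the same route as the paper: the sign-vector (orthant) pigeonhole argument for $M(K)\leq 2^d$ is precisely the paper's proof of its Lemma~\ref{lem:cubebound}, and the reduction via Lemma~\ref{lemma:minkowski} mirrors the proof of Theorem~\ref{thm:minkowskispecific}. Your optimality construction (red clusters at the cube's vertices, blue points at the center, radii strictly between $1/2$ and $1$) is the same idea as the paper's pentagon-type example adapted to the cube, and it checks out.
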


In the results above, the points $p_i$ play the role of the centers of the sets of the Minkowski arrangement. One might ask if this restriction is essential. As a final result, we give a general construction to show that it is.

\begin{theorem}\label{thm:badexample}
		Let $K$ be any convex body in the plane and $\epsilon,\lambda$ any positive real numbers. There exist sets of points $R=\{p_1,\ldots,p_n\}$ and $B$ in the plane such that $|B|<\epsilon n$ and that for each $i$ there is a translate $K_i$ of $K$ that contains $p_i$ for which $|B\cap K_i|\geq \lambda |R\cap K_i|$.
\end{theorem}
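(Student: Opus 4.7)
Proof plan for Theorem~\ref{thm:badexample}.

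I plan to generalise the following construction for the unit disk. Cluster $\lceil\lambda\rceil$ blue points at the origin and place the $n$ red points on the circle of radius $2$; for each red point $p_i$ take $K_i$ to be the unique unit disk through the origin and $p_i$. Strict convexity forces $K_i$ to meet the circle only at $p_i$, so $|B\cap K_i|=\lceil\lambda\rceil$, $|R\cap K_i|=1$, and $|B|/|R|=\lceil\lambda\rceil/n<\epsilon$ for $n$ large. My goal for a general $K$ is to produce, for any $n$, points $p_1,\dots,p_n\in K-K$ and translates $K_i=K+v_i$ of $K$ containing the origin and $p_i$ but no other $p_j$, and to set $B$ to be $\lceil\lambda\rceil$ points in a tiny neighbourhood of the origin. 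This immediately gives $|B\cap K_i|\ge\lambda|R\cap K_i|$ and the required global bound on $|B|/|R|$.

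The natural framework is the partial order on $K-K$ defined by $p\le_K q$ iff $p$ lies in every translate of $K$ containing both the origin and $q$; the requirement on the $K_i$'s asks for an antichain $\{p_i\}$ together with a simultaneous choice of $v_i\in(-K)\cap(p_i-K)$ avoiding all the forbidden translates $p_j-K$ ($j\ne i$). When $K$ is strictly convex the intersection $(-K)\cap(p-K)$ is a singleton for every $p\in\partial(K-K)$ and strict convexity makes the associated translate tangent to $\partial(K-K)$ at $p$, so any $n$-subset of $\partial(K-K)$ works. When $K=[0,1]^2$ the boundary approach fails (flat edges lie in common translates), but the anti-diagonal $p_i=(i/n,(n-i)/n)$ of the unit square works, with explicit choice $v_i=(x_{i+1}-1-\eta,\,y_{i-1}-1-\eta)$ (and the natural convention at the endpoints) for small $\eta>0$: the translated unit square contains the origin and $p_i$ while missing every other $p_j$ by separating in the $x$-direction for $j>i$ and in the $y$-direction for $j<i$.

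For a general convex body $K$ with non-empty interior I would combine the two ideas. After translating $K$ so that the origin is in its interior, pick two linearly independent vectors $u,w$ with the parallelogram $P=\{su+tw:s,t\in[0,1]\}$ contained in $K$, and set $p_i=(i/n)u+((n-i)/n)w$; the unit-square argument applied in the $(u,w)$-coordinates produces $v_i$ for which the translated parallelogram $P+v_i$ satisfies the required separation. The main obstacle is that $K$ may strictly contain $P$, so the full translate $K+v_i$ could re-include an excluded $p_j$ through the ``extra'' region $K\setminus P$. To handle this I would use the freedom inside the two-dimensional admissible set $(-K)\cap(p_i-K)$ to perturb each $v_i$ away from every one of the finitely many obstructions $p_j-K$; the antichain property guarantees that no single $p_j-K$ covers $(-K)\cap(p_i-K)$, but upgrading this pairwise fact to a \emph{simultaneous} avoidance of the union over all $j\ne i$ is what I expect to be the technical core of the argument, likely requiring either a placement of the antichain near $\partial(K-K)$ where the admissible sets have rich boundary structure, or a density/genericity argument showing that the set of bad $v_i$'s has empty interior in $(-K)\cap(p_i-K)$.
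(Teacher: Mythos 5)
Your plan is not a complete proof: for a general convex body you explicitly leave open the step that you yourself identify as the technical core, namely choosing the translates $v_i$ so that $K+v_i$ \emph{simultaneously} avoids all of the sets $p_j-K$ with $j\neq i$. The two cases you do settle (strictly convex $K$ via $\partial(K-K)$, and the square via the anti-diagonal) do not combine in any evident way to cover, say, a triangle or a body whose boundary mixes flat edges with strictly convex arcs, and the parallelogram-inside-$K$ idea breaks exactly where you say it does: the region $K\setminus P$ can re-capture an excluded $p_j$, and nothing in the antichain setup prevents this. There is also a smaller unaddressed point in the strictly convex case: for $p_i\in\partial(K-K)$ the unique admissible translate $K+v_{p_i}$ has the origin on its \emph{boundary}, so a cluster of $\lceil\lambda\rceil$ distinct blue points near the origin need not lie in every $K_i$ (for the unit disk with red points spread over the whole circle of radius $2$, the intersection of all the $K_i$ is the single point $\{0\}$); you would need to confine the red points to a short arc, or enlarge and arrange the blue cluster, to fix this.

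The paper's proof avoids all of these difficulties with one observation: choose a supporting line $\ell$ of $K$ that meets $K$ in a single exposed point $t$ (such a line always exists, e.g.\ at the farthest point of $K$ from a fixed exterior point), and slide $K$ in the direction of $\ell$. Every such translate still has $\ell$ as a supporting line and meets $\ell$ in exactly one point, the translate of $t$. Now fix a short segment $I$ parallel to $\ell$ in the interior of $K$, let $J\subset\ell$ be the (non-degenerate) segment traced by $t$ as $K$ slides while still containing $I$, put all $n$ red points on $J$ and $\lceil\lambda\rceil$ blue points on $I$. The translate whose tangent point is $p_i$ contains exactly one red point and all the blue points, so the local ratio is at least $\lambda$ while the global ratio is $\lceil\lambda\rceil/n<\epsilon$ for $n$ large. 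If you want to salvage your approach, the closest repair is to place your antichain in a small neighbourhood of a single exposed point of $K-K$, which essentially collapses into the paper's construction.
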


In particular, even if each red point is contained in a unit disk with many blue points, the global blue to red ratio can be as small as desired. This is a possibly counter-intuitive fact in view of Theorem \ref{thm:euclideanplane}.




\section{Proofs}

\begin{proof}[Proof of Lemma~\ref{lemma:minkowski}]
  We construct a subfamily $\D^{\prime}$ of $\D$ with the property that no
member of $\D^{\prime}$ contains the center of any member of $\D^{\prime}$,
and $\bigcup \D^{\prime}$ covers the red points, $R$.
Assume without loss of generality that the labels of the points in $R$ are
sorted in non-increasing order of the homothety ratio, that
is, $\rho_1 \geq \cdots \geq \rho_n$. See Figure \ref{fig:cover} for an example.

\begin{figure}
	\centering
		\includegraphics[width=0.40\textwidth]{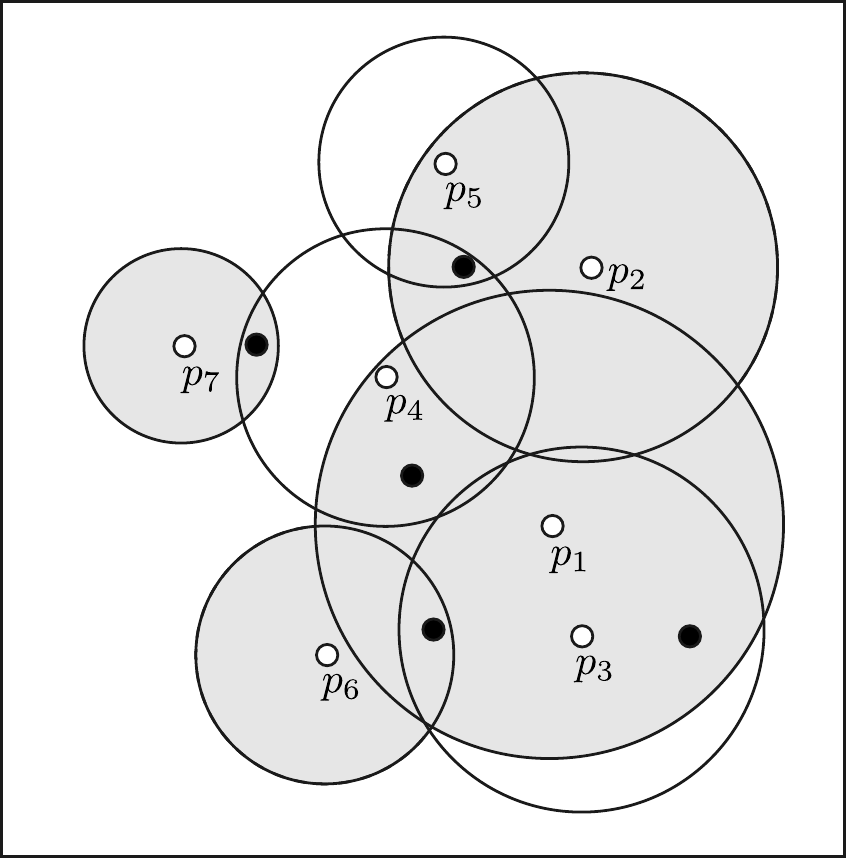}
	\caption{The centers of the disks are labeled in decreasing order of corresponding radii. The shaded disks cover the white points and no shaded disk contains the center of another.}
	\label{fig:cover}
\end{figure}

  We construct $\D^{\prime}$ in a greedy manner as follows: Add $K_1$ to
$\D^{\prime}$. Among
all red points that are not already covered by $\D^{\prime}$ pick a point $p_j$ whose
corresponding homothet $K_j$ has maximum homothety ratio $\rho_j$. Add $K_j$ to $\D^{\prime}$
and repeat until all red points are covered by $\D^{\prime}$. Note that the
homothets in $\D^{\prime}$ are not necessarily disjoint.

Clearly, $R\subset \bigcup \D^{\prime}$. Now we show that no member of $\D^{\prime}$
contains the center of another. Suppose to the contrary that $K_i$ contains the center
of $K_j$. If $i<j$, then $\rho_i\geq \rho_j$ so $K_i$ was chosen first, a contradiction to the fact that  $p_j$ was chosen among the points not covered by previous homothets. If $i>j$, then $K_j$ also contains the center of $K_i$, and we get a similar contradiction.

This finishes the proof of Lemma~\ref{lemma:minkowski}.

\end{proof}

\begin{proof}[Proof of Theorem~\ref{thm:minkowskispecific}]
By Lemma~\ref{lemma:minkowski}, there exists a subfamily $\D' \subset \D$ that covers $R$ and form a strict Minkowski arrangement.
Namely, $\bigcup \D^{\prime}$ covers $R$, and no point of $B$ is
contained in more than $M(K)$ members of $\D^{\prime}$.
In particular, it follows that
$$
\cardin{R} \leq \sum_{K \in \D'}\cardin{R\cap K} \leq  \sum_{K \in \D'}\frac{\cardin{B \cap K}}{\lambda}\leq \frac{M(K)}{\lambda}\cardin{B}
$$

so $$\frac{\cardin{B}}{\cardin{R}} \geq \frac{\lambda}{M(K)} \geq \frac{\lambda}{3^d}.$$ This completes the proof.
\end{proof}

\begin{lemma}\label{lem:euclideanMbound}
Let $K$ be the Euclidean unit disk centered at the origin. Then $M(K)=5$.
\end{lemma}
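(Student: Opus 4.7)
The plan is to establish $M(K)=5$ by separately proving the matching upper and lower bounds.

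For the upper bound $M(K)\leq 5$, I would take an intersecting strict Minkowski arrangement $D_1=p_1+\rho_1 K,\ldots,D_n=p_n+\rho_n K$, fix a common point $q$, and translate so that $q=0$. Then $0\in D_i$ gives $|p_i|\leq \rho_i$, while the strict Minkowski condition gives both $|p_i-p_j|>\rho_i$ and $|p_i-p_j|>\rho_j$ for $i\neq j$, hence $|p_i-p_j|>\max(|p_i|,|p_j|)$. The edge case $p_i=0$ would force $n=1$, since any further $p_j$ would have to satisfy $|p_j|\leq \rho_j$ and $|p_j|>\rho_j$ simultaneously. So I may assume $p_i\neq 0$ for all $i$ and consider the angle $\theta_{ij}$ between $p_i$ and $p_j$ at the origin. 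Assuming without loss of generality that $|p_i|\leq |p_j|$, the law of cosines applied to $|p_i-p_j|^2>|p_j|^2$ gives
\[
|p_i|^2 - 2|p_i||p_j|\cos\theta_{ij} > 0,
\]
whence $\cos\theta_{ij} < |p_i|/(2|p_j|) \leq 1/2$ and $\theta_{ij}>\pi/3$. Since the $n$ radial directions $p_i/|p_i|$ have pairwise angular separation strictly greater than $\pi/3$, summing the consecutive gaps around the unit circle yields $n\cdot\pi/3<2\pi$, so $n\leq 5$.

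For the lower bound $M(K)\geq 5$, I would exhibit five unit disks forming such an arrangement: place $p_k=(\cos(2\pi k/5),\sin(2\pi k/5))$ and take $\rho_k=1$ for $k=0,\ldots,4$. The origin lies on every disk (so the family is intersecting), and the distance between any two of these centers is at least $2\sin(\pi/5)>1$, so no disk contains the center of another.

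The main obstacle is extracting the strict angular bound $\theta_{ij}>\pi/3$ by using both Minkowski conditions together with the common-point constraint, and correctly dispatching the degenerate case where some $p_i$ coincides with $q$. Once these are handled, the pigeonhole argument around the circle closes the upper bound and the pentagonal construction supplies a matching lower bound essentially for free.
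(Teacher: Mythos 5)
Your proof is correct and takes essentially the same approach as the paper: the unit-radius regular pentagon for the lower bound, and for the upper bound the observation that the common point forces any two centers to subtend an angle strictly greater than $\pi/3$, so at most five disks fit. Your write-up is slightly more careful than the paper's (you handle the degenerate case $p_i=q$ and derive the strict angular inequality explicitly, where the paper argues by pigeonhole on six disks and invokes the law of cosines informally), but the underlying argument is identical.
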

\begin{proof}[Proof of Lemma~\ref{lem:euclideanMbound}]
Five unit disks centered in the vertices of a unit-radius regular pentagon show that $M(K)\geq 5$. See Figure \ref{fig:optimal}a.

\begin{figure}
	\centering
		\includegraphics[width=0.30\textwidth]{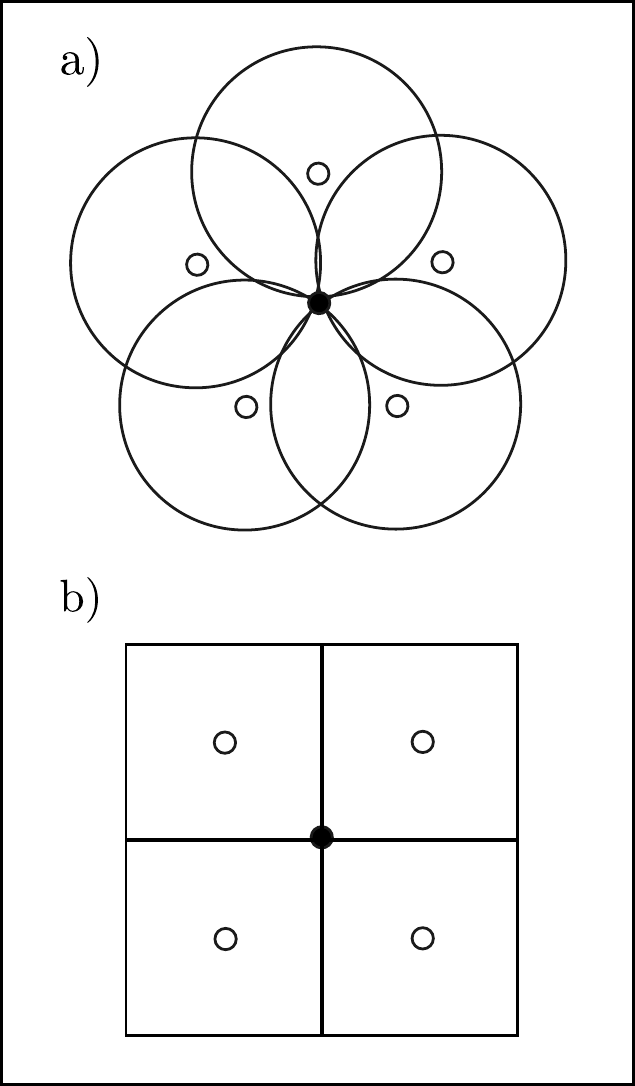}
	\caption{Optimal Minkowski arrangements in the plane for a) Euclidean disks, b) axis-parallel squares.}
	\label{fig:optimal}
\end{figure}

To prove the other direction, suppose that
there is a point $b$ in the plane that is contained in $6$ Euclidean disks in a
strict Minkowski arrangement. Then, by the pigeonhole principle, there are two
centers of those disks, say $p$ and $q$ such that the angle
$\sphericalangle(pbq)$ is at most
$60^\circ$.
Assume without loss of generality that $pb \geq qb$.
It is easily verified e.g., by the law of cosines, that the distance $pq$ is
less than $pb$. Hence, the disk centered at $p$ contains $q$, a contradiction. This completes the proof.
\end{proof}

\begin{lemma}\label{lem:cubebound}
Let $K$ be the unit cube of $\mathbb{R}^d$ centered at the origin. Then $M(K)=2^d$.
\end{lemma}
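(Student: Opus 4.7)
The plan is to establish $M(K)\geq 2^d$ by explicit construction and $M(K)\leq 2^d$ by a sign-pattern pigeonhole argument. I take $K=[-1/2,1/2]^d$, so that $\|\cdot\|_\infty$ is the associated norm and containment in a homothet of $K$ reduces to coordinatewise inequalities on the centers and radii.

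For the lower bound I exhibit the $2^d$ translates $K+c$ as $c$ ranges over $\{-1/2,+1/2\}^d$. Each such translate contains the origin, because every coordinate of $c$ has modulus $1/2$; this yields the intersecting property. For any two distinct labels $c\neq c'$, there is a coordinate where $|c_i-c'_i|=1>1/2$, so $c'\notin K+c$, which gives the strict Minkowski property. For $d=2$ this reduces to the arrangement shown in Figure~\ref{fig:optimal}(b).

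For the upper bound, suppose that $K_1,\dots,K_N$ with $K_i=p_i+\rho_i K$ form a strict Minkowski arrangement, all containing a common point $b$. I would assign to each center $p_i$ an orthant label $\sigma(p_i)\in\{-1,+1\}^d$ whose $k$-th coordinate records the sign of $p_i^{(k)}-b^{(k)}$, with a fixed convention for zero entries. If $N>2^d$, then by pigeonhole there are indices $i\neq j$ with $\sigma(p_i)=\sigma(p_j)$; assume without loss of generality that $\rho_i\geq\rho_j$. The goal becomes to show $p_j\in K_i$, which contradicts the strict Minkowski hypothesis.

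The core of that step is a coordinatewise computation. For each $k$, the real numbers $p_i^{(k)}-b^{(k)}$ and $p_j^{(k)}-b^{(k)}$ have the same sign and moduli at most $\rho_i$ and $\rho_j\leq\rho_i$ respectively, because $b\in K_i\cap K_j$. Two numbers of common sign whose moduli are bounded by $\rho_i$ differ by at most $\rho_i$, hence $|p_i^{(k)}-p_j^{(k)}|\leq\rho_i$ for every $k$, and therefore $\|p_i-p_j\|_\infty\leq\rho_i$, i.e.\ $p_j\in K_i$. I do not anticipate any real obstacle: the only point of care is the convention for coordinates on which $p_i^{(k)}=b^{(k)}$, but such a coordinate contributes $0$ to the relevant difference and fits harmlessly into either chosen orthant.
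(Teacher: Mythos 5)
Your proof is correct and follows essentially the same route as the paper: the $2^d$ translates centered at the sign vectors for the lower bound, and an orthant/sign-pattern pigeonhole for the upper bound, concluding that the larger cube contains the other center. The only cosmetic differences are that you work relative to the common point $b$ instead of translating it to the origin and order the two offending centers by homothety ratio rather than by $\|\cdot\|_\infty$-distance to $b$; both variants yield the same coordinatewise estimate.
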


\begin{proof}[Proof of Lemma~\ref{lem:cubebound}] Let $d$ be a positive integer and $e_1,e_2,\ldots,e_n$ the canonical base of $\mathbb{R}^d$. Consider all the cubes of radius $1$ centered at each point of the form $\pm e_1 \pm e_2 \pm \ldots \pm e_d$. This family shows that $M(K)\geq 2^d$. See Figure \ref{fig:optimal}b for an example on the plane.

Now we show the other direction. Consider the $2^d$ closed regions of $\mathbb{R}^d$ bounded by the hyperplanes $x_i=0$ $i=1,2,\ldots,d$ and suppose on the contrary that we have an example with $2^d+1$ cubes or more that contain the origin. By the pidgeon-hole principle there is a region with at least two cube centers $u$ and $v$. By applying a rotation we may assume that it is the region of vectors with non-negative entries. We may also assume $\delta:=\|u\|_\infty \geq \|v\|_\infty$.

Since the $d$-cube centered at $u$ contains the origin, its radius must be at least $\delta$. We claim that this cube contains $v$. Indeed, each of the entries of $u$ and $v$ are in the interval $[0,\delta]$. So each of the entries of $u-v$ are in $[-\delta,\delta]$. Then $\|u-v\|_\infty \leq \delta$ as claimed. This contradiction finishes the proof.
\end{proof}

Theorem~\ref{thm:euclideanplane} clearly follows from combining the proof of
Theorem~\ref{thm:minkowskispecific} (with $\lambda=1$) and
Lemma~\ref{lem:euclideanMbound}. The result is sharp because we have equality when $R$ is the set of vertices of a regular pentagon with center $p$ and $B=\{p\}$. Similarly, Theorem \ref{thm:cubes} and its optimality follow from Lemma \ref{lem:cubebound}.

\begin{rem}

Lemma~\ref{lem:euclideanMbound} can be generalized to arbitrary dimension. This
implies
that Theorem~\ref{thm:euclideanplane} can be generalized to arbitrary dimension
almost verbatim.
\end{rem}

\begin{proof}[Proof of Theorem~\ref{thm:badexample}] Let $K$ be any convex body in the plane. We construct sets $R$ and $B$ as follows. Let $\ell$ be a tangent line of $K$ which intersects $K$ at exactly one point $t$. Let $I$ be a non-degenerate closed line segment contained in $K$ and parallel to $\ell$. Let $J$ be the (closed) segment that is the locus of the point $t$ as $K$ varies through all its translations in direction $d$ that contain $I$. See Figure \ref{fig:intervals}.

\begin{figure}
	\centering
		\includegraphics[width=0.40\textwidth]{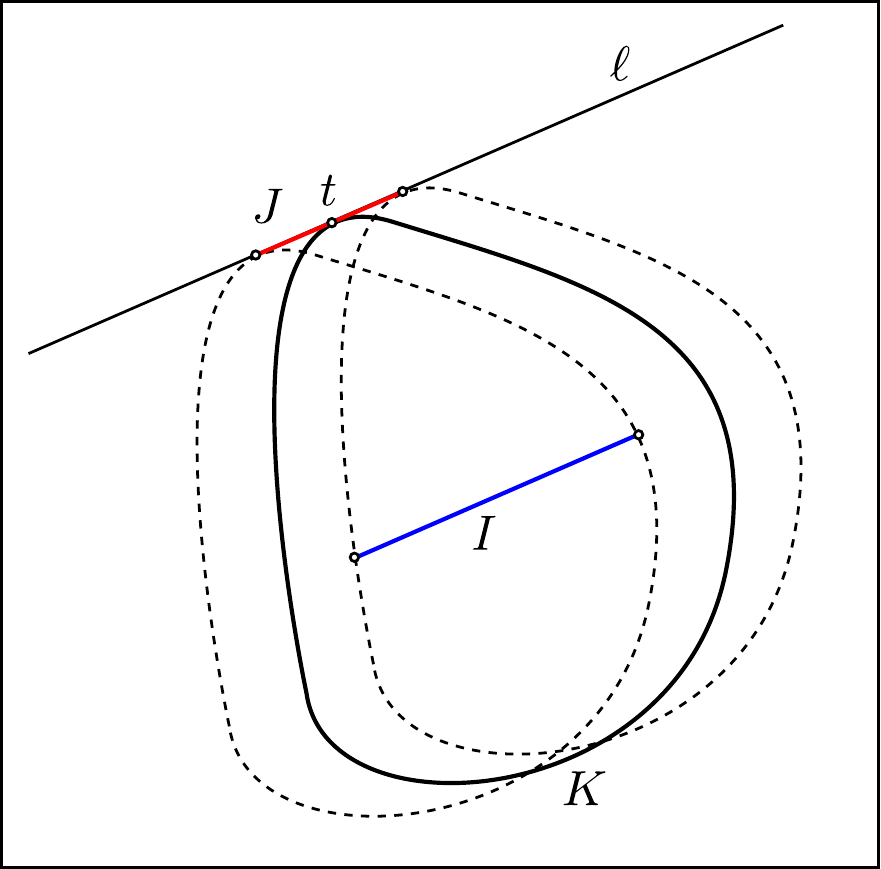}
	\caption{Construction of example without local-to-global phenomenon.}
	\label{fig:intervals}
\end{figure}

We construct $R$ by taking any $n$ points from $J$ and we construct $B$ by taking any $m$ points from $I$. For any point in $R$ there is a translation of $K$ that contains exactly one point of $R$ and $m$ points of $B$, which makes the local $B$ to $R$ ratio equal to $m$. But globally we can make the ratio $\frac{m}{n}$ arbitrarily small.
\end{proof}

\section*{Acknowledgements}

M. Nasz\'odi acknowledges the support of the J\'anos Bolyai Research Scholarship of the Hungarian Academy of Sciences, and
the National Research, Development, and Innovation Office, NKFIH Grant PD-104744, as well as the support of the Swiss National Science
Foundation grants 200020-144531 and 200020-162884.

L. Martinez-Sandoval's research was partially carried out during the author's visit at EPFL. The project leading to this application has received funding from the European Research Council (ERC) under the European Union’s Horizon 2020 research and innovation programme grant No. 678765 and from the Israel Science Foundation grant No. 1452/15.

S. Smorodinsky's research was partially supported by Grant 635/16 from the Israel Science Foundation. A part of this research was carried out
during the author's visit at EPFL, supported by Swiss National Science Foundation grants 200020-162884 and 200021-165977.

\bibliographystyle{plain}
\bibliography{biblio}

\end{document}